\newcommand\scri{\mathcal{I}}
\newtheorem{thm}{Theorem}
\newtheorem{Def}{Definition}
\newtheorem{cor}[thm]{Corollary}
\begin{document}

\title{Non-existence of Asymptotically Flat Geons in $2+1$ Gravity}
\author{Kory A. Stevens}
\affiliation{Department of Physics and Astronomy, University of British Columbia,
Vancouver, British Columbia, Canada \ V6T 1Z1}
\author{Kristin Schleich}
\author{Donald M. Witt}
\affiliation{Department of Physics and Astronomy, University of British Columbia,
Vancouver, British Columbia, Canada \ V6T 1Z1}
\affiliation{Perimeter Institute for Theoretical Physics, 31 Caroline Street North, Waterloo,
Ontario, Canada N2L 2Y5}
\date{\today}

\begin{abstract}
Geons, small topological structures that exhibit particle properties such as charge and angular momentum without the presence of matter sources, have been extensively discussed in  $3+1$-dimensional general relativity. Given the recent renewal of interest in $2+1$ gravity,  it is natural to ask whether or not the notion of geons extends to three dimensions. We prove here that, in contrast to the $3+1$-dimensional case,  there are no $2+1$-dimensional asymptotically flat solutions of the vacuum Einstein  or Einstein-Maxwell equations containing geons.  In contrast, $2+1$-dimensional asymptotically anti-de Sitter spacetimes can indeed contain geons; however, the geons are always hidden behind a single black hole horizon. We also prove sufficient conditions for the non-existence of $2+1$-dimensional asymptotically flat geon-containing solutions. \\
 
\normalsize  
\noindent {\it This article is dedicated to Rafael Sorkin, whose encouragement is responsible for one of us (DW) pursuing research in physics and whose work on topology and quantum gravity has inspired all of us.}
\end{abstract}
\pacs{04.20.Cv, 04.20.Gz}
\maketitle

\section{Introduction}

Over half a century ago, Wheeler proposed that geons, small isolated gravitational structures without horizons in general relativity, could exhibit particle-like properties such as charge and angular momentum without the presence of matter sources \cite{Wheeler:1955zz}.  This work, further elaborated by Wheeler and his collaborators in  \cite{Wheeler:1957mu,Misner:1957mt,Brill:1957,Ernst:1957,Brill:1964} provided an intriguing alternative  to the introduction of fundamental matter fields in quantum theories that include gravity. The early study of geons  emphasized the analysis  of 
asymptotically flat spacetimes containing soliton-like gravitating electromagnetic fields with simple topology ${\mathbb R}^3$; however initial data for geons with nontrivial topology was also exhibited, mainly in the context of the production of nontrivial electric charge. Though conceptually appealing, technical problems were discovered that reduced interest, namely that asymptotically flat geons on ${\mathbb R}^3$ with no horizons were unstable on short timescales; the solitonic-like configurations tended either to collapse to a black hole or disperse. In addition, early topological geons exhibited the unpleasing property that both electric and magnetic charges could be produced by threading the  topological structure with the appropriate electromagnetic field. Thus these geons yielded no selection rule for the observed absence of magnetic monopoles. However in 1979, Sorkin showed that a geon with the topology of a non-orientable handle did not exhibit this flaw \cite{Sorkin:1979ja}; it produced only one kind of monopole charge, which could be taken as the electric charge. Soon after, Friedman and Sorkin gave an interesting formal argument that the inclusion of topological geons in  4-dimensional quantum 
gravity resulted in  spin 1/2 quantum states with no fermionic matter sources \cite{Friedman:1980st,Friedman:1982du}. These two arguments renewed interest in the role of geons, in particular those with nontrivial topology, as quantum particles in 4-dimensional quantum gravity \cite{Sorkin:1985bg}. Henceforth, the term geon will refer to ones with nontrivial topological structure.

Further work elaborated key features of geons in classical general relativity.
The detailed conditions for the formal
existence of spin 1/2 states from geons and 
interesting ties to the topology of 3-manifolds were given in a series of papers by Friedman and
Witt  \cite{Friedman:1983ft, Witt:1986ef, Friedman:1986ze, Friedman:1988he}.\footnote{These results also yielded counter-examples to some conjectures in $3$-dimensional 
topology.} In addition, Witt showed that physically reasonable initial data sets for the Einstein equations can be constructed on all smooth 3-manifolds \cite{Witt:1986ng}. Thus any quantum geon in 4-dimensional quantum gravity will yield a classical geon  in the correspondence limit. This analysis of geons in classical relativity laid the foundation for further formal study of geons in 4-dimensional quantum gravity, in particular the study of the applicability of spin-statistics to quantum geons \cite{Sorkin:1985bg,Aneziris:1988xz,Aneziris:1989cr,Balachandran:1990wr,Sorkin:1996yt,Dowker:2000zy}. These formal results on spin 1/2 quantum states of geons and their spin-statistics are generic in the sense that they apply in any theory of quantum gravity in which the rotation operators can be defined at  asymptotic infinity. Nonetheless, no concrete application of these results in a quantum setting can be investigated as no complete theory of quantum gravity  in 4-dimensions is known.  Therefore, order to gain a
better understanding of the physics of quantum geons,  it would be useful to have an analogous, well defined quantum arena to be able to investigate their properties in.

 A natural starting point for such an investigation is  $2+1$-dimensional quantum gravity.
Twenty years ago, it was pointed out by Witten that $2+1$-dimensional quantum gravity  was exactly solvable \cite{Witten:1988hc}.  By working on the space of solutions, quantum amplitudes for $2+1$ gravity with closed spatial topologies could be formulated in terms of finite dimensional quantum mechanics.
Thus problems such as the nonrenormalizability of quantum gravity in more than two dimensions associated with traditional approaches based on quantization on infinite dimensional configuration space could be avoided.
This result led to a renaissance of research in Chern-Simons theories and topological field theories in three and more dimensions, mainly concentrating on the flat, closed universe case (See \cite{Carlip:1998uc} and \cite{Carlip:2004ba} for an overview of this large set of results).
Indeed, this work spurred the investigation of formal properties of $2+1$-dimensional geons such as spin \cite{Samuel:1993ua} and a generalized spin 
statistics theorem for the geons in $2+1$-dimensional quantum gravity \cite{Balachandran:2000kv} based on the assumption that quantization of closed, flat universes could be extended to the open case.

Recently, there has been  renewed interest in $2+1$-dimensional gravity; Witten 
 pointed out certain issues are more subtle in the asymptotically anti-de Sitter case and should be revisited, a case not considered in detail in the first era of work \cite{Witten:2007kt}. This suggestion has resulted in a resurgence of work in $2+1$-dimensional quantum gravity in the context of $2+1$-dimensional anti-de Sitter spacetimes and related $2+1$-dimensional theories such as topologically massive gravity \cite{Carlip:2008jk,Carlip:2008eq} and chiral gravity \cite{Li:2008dq}. These $2+1$-dimensional quantum gravity models explicitly  exhibit asymptotic regions. Therefore they provide a natural  forum in which it may be possible to extend formal results on geons and other such structures  more rigorously to the quantum realm.  
 
 Identification of classical $2+1$-dimensional geons is a key starting point for the study of quantum geons; it allows identification of formal results on the spin and charge of geons to their quantum counterparts.  Furthermore, if such classical geons exist, then so should their quantum counterparts by the correspondence principle. This paper will address this key question; are there geons with nontrivial topology in classical $2+1$ gravity? We will prove that, in contrast to the $3+1$-dimensional case, there are no asymptotically flat solutions to the $2+1$-dimensional vacuum Einstein or Einstein-Maxwell equations that contain geons. Hence it is unclear how to relate formal results on $2+1$ geons to quantum ones for asymptotically flat spacetimes. On the other hand, geons and other spacetimes with nontrivial topology do exist in $2+1$-dimensional asymptotically AdS spacetimes \cite{Brill:1995jv,Aminneborg:1997pz,Aminneborg:1998si}. Hence these spacetimes would seem to provide a more natural starting point for the study of $2+1$ geons.

In Section \ref{prelim}, we summarize basic definitions needed in the proof of our results. In Section \ref{sec3}, we prove a contradiction to the existence of an outer trapped surface for  asymptotically flat $2+1$-dimensional solutions to the vacuum Einstein and Einstein-Maxwell equations and provide sufficient conditions for the non-existence of an outer trapped surface for spacetimes with zero cosmological constant that satisfy the dominant energy condition.  We begin with a proof of a result of  Ida \cite{Ida:2000jh} on the nonexistence of trapped surfaces in spacetimes with positive cosmological constant. We then fix a gap in the extension of Ida's result to the case of zero cosmological constant by first proving that there exist no trapped surfaces in this case if the spacetime is analytic. We conclude by showing that $2+1$-dimensional Einstein spaces are indeed analytic. In Section \ref{sec4} we prove our main result.  First, topological rigidity,
 a consequence of  topological censorship  \cite{Friedman:1993ty,Galloway:1999bp,Galloway:1999br}, is used to 
prove the existence of a horizon for both the asymptotically flat and asymptotically anti-de Sitter geons. 
We then prove that asymptotically flat geon spacetimes must contain an outer trapped surface. We then use the results of Section \ref{sec3} to prove a contradiction to the existence of a horizon for asymptotically flat spacetimes of nontrivial topology for the vacuum and Einstein-Maxwell cases. Finally we note that, in contrast,
asymptotically AdS spacetimes admit an outer trapped surface and thus there is no contradiction.
Hence geons exist in $2+1$-dimensional asymptotically AdS spacetimes. 

\section{Preliminaries}\label{prelim}

We begin by recalling some basic definitions from general relativity needed in the statement and proof of the theorems.

A spacetime satisfies the {\it null energy condition} (NEC) if $R_{ab} W^a W^b \geq 0$ 
for all null $W^a$, the {\it weak energy condition} (WEC) if $T_{ab} W^a W^b \geq 0$ 
for all timelike $W^a$  and the {\it dominant energy condition} (DEC) if the WEC holds and
$T_{ab} W^b T^a_{c} W^c \leq 0 $ \cite{he}. Notice the weakest of these energy conditions 
is the NEC condition. If the DEC is satisfied,  it directly follows that the WEC is satisfied. If the WEC is satisfied, then so is the NEC by a continuity argument. 

 The {\it universal covering space} or {\it universal covering manifold} of $M$ can be constructed in the following way: Pick a point $x_0\in
M$ and consider the set of smooth paths $P=\{c:[0,1]\rightarrow M|c(0)=x_0\}$.  A
projection map $\pi:P\rightarrow M$ is defined by  $\pi(c(t))=c(1)$.  Let ${\cal
M}$ be $P$ modulo the equivalence relation, $c_1\sim c_2$ if and only if
$c_1(1)=c_2(1)$ and $c_1$ is homotopic to $c_2$ with endpoints fixed. The
projection map $\pi$ is then well defined and smooth as a map $\pi:  {\cal
M}\rightarrow M$. By construction, the universal covering manifold ${\cal M}$ is simply connected.

Next, a $n+1$-dimensional spacetime $( M, g_{ab})$ is {\it asymptotically flat} (AF) if it can be
conformally included into a spacetime-with-boundary
$M'=  M \cup \scri$, with metric $g_{ab}'$, such that (a)
 for some conformal
factor $\Omega\in C^{1}(M')$, $g_{ab}' = \Omega^2 g_{ab}$ on $M$  and $\Omega$ vanishes on $\scri$ but has null gradient which is nonvanishing  pointwise on $\scri$. (b) The boundary $\partial 
M' = M'\setminus M= \scri$ is a disjoint union of past and future parts $\scri^+ \cup\scri^-$, each having topology $S^{n-1}\times \mathbb R$ with  complete null generators. 

A $n+1$-dimensional spacetime $(M, g_{ab})$ is {\it asymptotically locally anti-de\! Sitter} (ALADS) if it can be
conformally included into a spacetime-with-boundary
$M'=  M \cup \scri$, with metric $g_{ab}'$, such that $\partial 
M' = \scri$ is timelike ({\it i.e.}, is a Lorentzian hypersurface 
in the induced metric) and $M = M' \setminus \scri$. 
The conformal
factor $\Omega\in C^{1}(M')$ satisfies (a) $\Omega > 0$ and $g_{ab}' = \Omega^2 g_{ab}$  on $M$, and (b) $\Omega = 0$ and $d \Omega \ne 0$ pointwise on $\scri$. We permit $\scri$ to have multiple components.

A spacetime $M$, possibly with boundary, is {\it globally hyperbolic} if it is strongly causal and the 
sets $J^+(p,{M})\cap J^-(q,M)$ are compact for all 
$p,q\in M$. \footnote{The timelike future (causal future)
of a set $S$ relative to  $U$, $I^+(S,U)$ ($J^+(S,U)$), is the 
set of all points that can be reached from $S$ by a future directed  
timelike curve (causal curve) in $U$.  The interchange of the past with future in the previous
definition yields $I^-(S,U)$ ($J^-(S,U)$). }

This definition is a generalization of that of a globally hyperbolic spacetime without boundary and is satisfied by ALADS spacetimes.\footnote{ In fact, it is that used in the proof of topological censorship in ALADS spacetimes \cite{Galloway:1999bp}.}  Also, note that the Penrose compactification of an AF spacetime (which is itself globally hyperbolic by the usual definition) is globally hyperbolic in this general sense.

A   {\it Cauchy surface} $V$
is a spacelike hypersurface 
such that every non-spacelike curve intersects this surface exactly once. Note $V$ for a manifold with boundary $\scri$ will 
have boundary 
on $\scri$. A {\it partial Cauchy surface} is a surface that satisfies the weaker condition that each non-spacelike curve intersects the surface at most once. 

The {\it domain of outer communications} (DOC) is the portion of a spacetime  ${ M}$ 
which is exterior to event horizons. Precisely ${D} = I^-(\scri^+_0)\cap I^+(\scri^-_0)$ for a connected component $\scri_0$ for an AF spacetime and
${D} = I^-(\scri_0)\cap I^+(\scri_0)$ for an ALADS spacetime. Intuitively, the DOC is the subset of ${ M}$ that
is in causal contact with $\scri$. Note that $D$ is the interior of an $(n+1)$-dimensional 
spacetime-with-boundary ${D}' = {D }\cup \scri$ and that $D'$ is itself a globally 
hyperbolic spacetime with boundary.

An {\it event horizon} is the boundary of the  DOC. More specifically, a {\it future event horizon} is 
the boundary of the causal past of a connected component of the boundary at infinity, $\scri_0$, $ \dot J^{-}(\scri_0, M')$, a {\it past event horizon}
is the boundary of the causal future of $\scri_0$, $ \dot J^{+}(\scri_0, M')$ and the {\it event horizon} is the union of future and past event horizons.

A closed orientable spacelike surface in the Cauchy surface $V$ is an {\it outer trapped surface} if the expansion of outgoing null geodesics orthogonal to it is nonpositive, $\theta \leq 0$. The case $\theta=0$ is  a {\it marginally outer trapped surface}, also termed an {\it apparent horizon}. 
Note that in 2+1 dimensions, an outer trapped surface is in fact a curve with the topology of a circle; however, we will adopt the convention of referring to it as a surface in keeping with the nomenclature of higher dimensions.

\section{Existence and Nonexistence of  Outer Trapped Surfaces in 2+1 Gravity}\label{sec3}

We begin with a set of theorems on the existence of outer trapped surfaces in AF and ALADS spacetimes in $2+1$ gravity. These theorems do not depend on the topology of the Cauchy surface; however they play an essential role in the analysis of geons in the next section. The first theorem demonstrates that a necessary condition for an outer trapped surface  is nonpositive cosmological constant. This result was first obtained by  Ida \cite{Ida:2000jh}. Our proof, using the triad notation of \cite{Ashtekar:2002qc},  parallels that of  \cite{Ida:2000jh}; both are based on the approach used in proving the topology of a marginally outer trapped surface by Hawking \cite{Hawking:1973} and that of the horizon in stationary spacetimes \cite{Hawking:1971vc,he}. (See \cite{Woolgar:1999yi} for a more readily available outline of the Hawking proof.) We then sharpen this result and fix a gap in its application  to the zero cosmological constant case, pointed out in the 4-dimensional case by Galloway \cite{Galloway:1994a} by proving that there are no outer trapped surfaces in AF spacetimes that are analytic. Finally we prove that all $2+1$-dimensional spacetimes with Einstein metric are analytic. 
\begin{thm}\label{trapsur}  
Let $M^{2+1}$ be a globally hyperbolic spacetime that satisfies the Einstein equations with cosmological constant,
$R_{ab}-\frac 12 R g_{ab} +\Lambda g_{ab} =8 \pi T_{ab}$, where
$T_{ab}$ obeys the DEC. If $M^{2+1}$ contains an outer trapped surface, 
and the Cauchy surface containing it is not itself entirely trapped, then
$\Lambda \leq 0$.
\end{thm}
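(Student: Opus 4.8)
The plan is to run Hawking's argument constraining a (marginally) trapped surface, dimensionally reduced to $2+1$ dimensions. The decisive structural point is that here the outer trapped ``surface'' is a circle, so its intrinsic curvature and its null shears vanish identically; the term that in $3+1$ dimensions feeds the Gauss--Bonnet theorem (and thereby controls the Euler characteristic of a horizon cross-section) is simply absent, and what the computation produces instead is a sign condition on $\Lambda$. I would argue by contradiction, assuming $\Lambda>0$. The hypothesis that the Cauchy surface $V$ is not entirely trapped means that not every point of $V$ is enclosed by an outer trapped curve: there is a region on which outgoing null rays expand, which serves as a barrier. Between the given outer trapped curve and this barrier there is then an outermost outer trapped curve $\Sigma^{\ast}\subset V$; since in $2+1$ dimensions the outer-trapping condition is a second-order equation for the geodesic curvature of a closed curve in $V$, the standard maximum-principle arguments show that $\Sigma^{\ast}$ is marginally trapped, $\theta_{(\ell)}\equiv 0$ on $\Sigma^{\ast}$ (work with one component of $\Sigma^{\ast}$ at a time if there are several).

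Next I would record the first variation of the outgoing null expansion under a normal deformation $\phi\,\nu$ of $\Sigma^{\ast}$ within $V$, with $\nu$ the outward unit normal and $\ell=u+\nu$, $n=\tfrac12(u-\nu)$ the future null normals normalized to $\ell\cdot n=-1$; the triad formalism of \cite{Ashtekar:2002qc} is convenient here. This gives $\delta_{\phi\nu}\theta_{(\ell)}=L\phi$ with
\[
L\phi = -\Delta_{\Sigma^{\ast}}\phi + 2\,\langle\omega,\nabla\phi\rangle + \Bigl(\tfrac12\,{}^{(1)}\!R_{\Sigma^{\ast}} - \tfrac12\,|\sigma_{(\ell)}|^{2} - G_{ab}\ell^{a}n^{b} + \operatorname{div}\omega - |\omega|^{2}\Bigr)\phi ,
\]
where $\omega$ is the normal connection one-form. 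Since $\Sigma^{\ast}$ is one-dimensional, ${}^{(1)}\!R_{\Sigma^{\ast}}=0$ and $\sigma_{(\ell)}=0$, so the zeroth-order coefficient collapses to $-G_{ab}\ell^{a}n^{b}+\operatorname{div}\omega-|\omega|^{2}$. Were the principal eigenvalue of $L$ negative, deforming $\Sigma^{\ast}$ outward along its (positive) principal eigenfunction would produce an outer trapped curve strictly outside it, contradicting outermostness; hence $\lambda_{1}(L)\ge 0$. Writing the principal eigenfunction as $e^{u}$ and completing the square in $\nabla u-\omega$ then yields, for every function $\psi$ on $\Sigma^{\ast}$,
\[
\int_{\Sigma^{\ast}}\!\Bigl(|\nabla\psi|^{2} - G_{ab}\ell^{a}n^{b}\,\psi^{2}\Bigr)\,ds \;=\; \lambda_{1}(L)\!\int_{\Sigma^{\ast}}\!\psi^{2}\,ds \;+\; \int_{\Sigma^{\ast}}\!\bigl|\nabla\psi - \psi(\nabla u-\omega)\bigr|^{2}\,ds \;\ge\; 0 .
\]

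Finally I would insert the field equations. With the convention $G_{ab}+\Lambda g_{ab}=8\pi T_{ab}$ one has $G_{ab}\ell^{a}n^{b}=8\pi T_{ab}\ell^{a}n^{b}+\Lambda$, and the DEC applied to the two future-directed null vectors $\ell$ and $n$ gives $T_{ab}\ell^{a}n^{b}\ge 0$, so $G_{ab}\ell^{a}n^{b}\ge\Lambda$. Choosing $\psi\equiv 1$ in the displayed inequality eliminates the gradient term and leaves
\[
0 \;\le\; -\!\int_{\Sigma^{\ast}} G_{ab}\ell^{a}n^{b}\,ds \;\le\; -\,\Lambda\;\mathrm{length}(\Sigma^{\ast}),
\]
which is impossible for $\Lambda>0$. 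Hence $\Lambda\le 0$.

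I expect the genuine work to be in the first step: constructing the outermost outer trapped curve and verifying that it is a marginally trapped, stable surface. This is exactly where the hypothesis that $V$ is not entirely trapped is indispensable — without a region exterior to the trapped ones (as happens, for instance, on a fully contracting spatial slice) there need be no such outermost curve and the conclusion can fail — and it is also where the borderline value $\Lambda=0$ turns delicate, since there the final inequality is saturated, every intermediate inequality must be an equality, and one cannot exclude a marginally trapped surface without the additional analyticity input developed in the theorems that follow.
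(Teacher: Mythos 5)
Your proposal is correct and follows essentially the same route as the paper: both run Hawking's variational argument on the outermost marginally outer trapped curve, exploit the vanishing of the intrinsic curvature and shear in one dimension, and use the DEC plus outermostness to force $\Lambda \leq 0$; the paper's choice of the scaling function $y$ solving $-d^2y/dx^2 + f = \textrm{const}$ with $\oint f$ determining the constant is just the hands-on version of your principal-eigenvalue and $e^{u}$ manipulation. One small correction: the zeroth-order coefficient of the stability operator should contain $-\tfrac12 G_{ab}\ell^a\ell^b - G_{ab}\ell^a n^b$ rather than $-G_{ab}\ell^a n^b$ alone, since the Raychaudhuri contribution $-\tfrac12 R_{ab}\ell^a\ell^b$ survives the reduction to one dimension (the paper keeps it inside its constant $c$); because $G_{ab}\ell^a\ell^b = 8\pi T_{ab}\ell^a\ell^b \geq 0$ under the DEC, the omitted term only reinforces your final inequality and the conclusion stands. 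Your closing remarks about where the genuine work lies (existence and stability of the outermost marginally outer trapped curve, and the delicacy of the borderline case $\Lambda = 0$) accurately reflect the paper's own treatment, which defers $\Lambda = 0$ to the analyticity results that follow.
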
 

\begin{proof} Let $V$ be a Cauchy surface in $M^{2+1}$ that  contains an outer trapped surface. This surface either  is itself a marginally outer trapped surface or lies in a trapped region. If the trapped region is  not all of $V$,  it  must have an outer boundary and it follows that this outer boundary is the outermost marginally outer trapped surface in $V$. Denote this outermost marginally outer trapped surface by $\mathcal{S}$. Let $l$ be the future directed null vector tangent to a null congruence orthogonal to $\mathcal{S}$, $n$ be the other future directed null vector orthogonal to $\mathcal{S}$ and $m$ be a spatial vector tangent to $\mathcal{S}$.  Normalize $l$, $n$ and $m$ such that 
\begin{align*} l \cdot n = -1 && l\cdot m = n\cdot m = 0&&m\cdot m = 1 \ .\end{align*}
The metric can be decomposed in terms of this null triad;
\begin{align*} g_{ab} = -l_a n_b -l_bn_a + m_am_b&&  g^{ab} = -l^a n^b -l^bn^a + m^am^b \ .\end{align*}
The covariant derivatives of $l$, $n$ and $m$ can be written in terms of the spin coefficients c.f.  \cite{Ashtekar:2002qc}:
\begin{align}
\nabla_al_b &= -\epsilon n_al_b + \kappa_{NP} n_a m_b - \gamma l_al_b + \tau l_a m_b + \alpha m_a l_b - \rho m_a m_b\nonumber\\
\nabla_a n_b &= \epsilon n_a n_b - \pi n_a m_b + \gamma l_a n_b - \nu l_a m_b - \alpha m_a n_b + \mu m_a m_b\nonumber\\
\nabla_a m_b &= \kappa_{NP} n_a n_b - \pi n_a l_b + \tau l_a n_b - \nu l_al_b - \rho m_a n_b + \mu m_a l_b\ .\label{spincoeff}
\end{align}
As the spacetime is $2+1$-dimensional, there is no shear or vorticity; the expansion  $$\theta= m^am^b\nabla_al_b= -\rho$$
 completely characterizes the projection into $\mathcal{S}$ of the evolution of this congruence.
 
Consider deforming $\mathcal{S}$  outwards in $V$ a distance $s$ along the spatial vector $v^a=e^y(l^a - n^a)$ in $V$.
\footnote{Note that $l^a$ and $n^a$ can be scaled, $l^a \to e^\phi l^a$ and $n^a\to e^{-\phi} n^a$, such that $l^a - n^a$ lies in $V$ and is proportional to the normal of $\mathcal{S}$ in $V$.}
Call this surface $\mathcal{S}(s)$. The scaling factor $y$ is initially arbitrary; it will be chosen to take on a convenient value later. $l^a$ and $n^a$ will be orthogonal to $\mathcal{S}(s)$ if 
\begin{align*}
l^am^b\nabla_b v_a = l^a v^b\nabla_b m_a &&\hbox{\textrm and}&& n^am^b\nabla_bv_a = n^a v^b\nabla_b m_a
\end{align*}
or, in terms of the spin coefficients,
\begin{align}
m^b\nabla_b y=\alpha + \kappa_{NP} - \tau= -\alpha+\pi-\nu\ .\label{ortho}\end{align}

A straightforward computation yields
\begin{align}l^a\nabla_a\theta &=  \kappa_{NP}(\tau - \pi +2\alpha) -\rho^2 -\rho\epsilon- m^a\nabla_a\kappa_{NP}-R_{ab}l^al^b\nonumber\\
-n^a\nabla_a \theta &=-\tau^2-(\mu-\gamma)\rho + \nu\kappa_{NP}+ m^a\nabla_a \tau - 
R_{abcd}n^al^bn^cl^d+R_{ab}n^al^b\label{nablaexp}
 \end{align}
 where the curvature convention used is $\nabla_a\nabla_b t_c - \nabla_b\nabla_a t_c = R_{abc}^{ \ \ \ d} t_d$.
In three dimensions, the Riemann curvature is determined by the Ricci curvature:
\begin{equation}\label{3Ricci}
R_{a b c d} = g_{ac} R_{b d} + g_{b d} R_{a c} - g_{ad} R_{b c} - g_{b c} R_{a d} +\frac 12 (g_{a d} g_{b c} - g_{a c} g_{b d}) R \ .
\end{equation}
Therefore
\begin{equation} - 
R_{abcd}n^al^bn^cl^d+R_{ab}n^al^b = -R_{ab}n^al^b+ \frac 12 R= -8\pi T_{ab}n^al^b -\Lambda\label{curve}
\end{equation}
upon imposing the Einstein equations.
It directly follows from (\ref{nablaexp}) after substitution from  (\ref{ortho}) and (\ref{curve}) that
\begin{equation} e^{-y}v^a\nabla_a \theta = -(\tau - \kappa_{NP})^2  + m^a\nabla_a(\tau-\kappa_{NP}) - \rho(\rho+\epsilon+ \mu - \gamma) -R_{ab}l^al^b-8\pi T_{ab}n^al^b - \Lambda \ . 
\label{thetapush}
\end{equation}
The orthogonality condition (\ref{ortho}) also yields $ \tau - \kappa_{NP}= \alpha-m^a\nabla_a y $. Using this relation in the divergence of $\tau - \kappa_{NP}$ in (\ref{thetapush}) and evaluating on $\mathcal{S}$ yields
\begin{equation} e^{-y}v^a\nabla_a \theta\bigg|_{\mathcal S} = -(\tau - \kappa_{NP})^2 - \Lambda  + m^a\nabla_a(\alpha-m^b\nabla_b y) -R_{ab}l^al^b-8\pi T_{ab}n^al^b\label{thetapush2} \end{equation}
Now, denoting $m^a\nabla_a =\frac {d\ }{dx}$, note that $-\frac {d^2y}{dx^2} +f(x) = 0$ has a solution if and only if $\int_{\mathcal{S}} f(x) = 0$. Defining
$$ c =  \int_{\mathcal{S}} R_{ab}l^al^b+8\pi T_{ab}n^al^b$$
one can use this fact to choose  the scaling $y$ such that the last three terms in (\ref{thetapush2}) are constant and
\begin{equation}\label{needed} \frac {d\theta}{ds} \bigg|_{\mathcal S}=v^a\nabla_a \theta = -e^y\left((\tau - \kappa_{NP})^2 + \Lambda  +c\right)\ .\end{equation}
The constant $c$ is nonnegative if the DEC holds; the DEC implies the NEC as well as that  $ T_{ab}l^b$ is a future pointing non-spacelike vector. Therefore  $ R_{ab}l^al^b$, $ T_{ab}n^al^b$ and consequently $c$ are nonnegative. Thus the right hand side of (\ref{needed}) is negative if $\Lambda >0$. This implies that there exists a trapped surface outside of $\mathcal{S}$, in contradiction to the assumption that it was the outermost marginally outer trapped surface. Hence $\Lambda\leq 0$. \end{proof}

Theorem \ref{trapsur} does not directly use the asymptotic properties of  the spacetime, although it does assume that the notion of an outer trapped surface is well defined, which implies that there is some asymptotic region with respect to which the outer direction can be defined. In addition, asymptotics can be used to show that  the trapped region does not consist of all of $V$. In particular,  AF and ALADS spacetimes have the property that large spatial circles near the boundary at infinity are untrapped. Hence, the asymptotics of these spacetimes guarantee that the trapped region is not all of $V$ and hence that there is a marginally outer trapped surface.

Theorem \ref{trapsur}  directly tells us that there can be no outer trapped surface in a spacetime with $\Lambda >0$.\footnote{This conclusion is, of course, subject to having an asymptotic region in a $\Lambda>0$ spacetime with respect to which a suitable definition of the outer direction can be defined.} If $\Lambda <0$, it implies that a spacetime can indeed have trapped surfaces: the BTZ black hole  \cite{Banados:1992wn} is an explicit example of such a spacetime. 

The case $\Lambda=0$ is more complicated. If the matter source is such that $c<0$ on $\mathcal{S}$, indicating that matter is crossing the apparent horizon, then again one immediately concludes that there can be no trapped surfaces in the spacetime. For example, spacetimes containing electromagnetic fields will have $c<0$; hence there can be no outer trapped surfaces in $2+1$-dimensional solutions of the Einstein-Maxwell equations. This conclusion clearly generalizes to other long range fields satisfying the DEC.  However, Theorem \ref{trapsur} leaves open the possibility that a trapped surface can exist in a spacetime with $\Lambda = 0$. Such a spacetime would necessarily have vanishing stress energy projection $T_{ab}l^a(n^b+l^b)$. In particular, the key case of vacuum spacetime is included in these possibility.

We begin by ruling out the possibility of an outer trapped surface  in analytic spacetimes with $\Lambda=0$. 
\begin{cor} \label{noanalytic} If $M^{2+1}$, satisfying the conditions of Theorem \ref{trapsur}, is an analytic spacetime with zero cosmological constant, then there are no outer trapped surfaces.
\end{cor}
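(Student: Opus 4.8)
The plan is to argue by contradiction, promoting the first-variation computation in the proof of Theorem \ref{trapsur} to an all-orders statement and then invoking analyticity. Suppose $M^{2+1}$ satisfies the hypotheses of Theorem \ref{trapsur}, has $\Lambda=0$, is analytic, and nonetheless contains an outer trapped surface. Since the relevant Cauchy surface $V$ is not entirely trapped, the trapped region of $V$ has, exactly as in the proof of Theorem \ref{trapsur}, an outermost marginally outer trapped surface $\mathcal{S}$; because the metric is analytic, $\mathcal{S}$ is an analytic circle, and by construction no surface of $V$ lying strictly outside $\mathcal{S}$ is outer trapped.

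First I would re-run the outward deformation $\mathcal{S}\mapsto\mathcal{S}(s)$ along $v^a=e^y(l^a-n^a)$ exactly as in Theorem \ref{trapsur}, but now with $\Lambda=0$. Equation (\ref{needed}) becomes $\frac{d\theta}{ds}\big|_{\mathcal{S}}=-e^y\big((\tau-\kappa_{NP})^2+c\big)$, which, since $c\ge 0$ by the DEC, is $\le 0$ everywhere on $\mathcal{S}$. If this first variation is bounded away from zero --- which happens whenever $(\tau-\kappa_{NP})^2+c$ is nowhere zero, in particular whenever $c>0$ --- then $\theta(\mathcal{S}(s))<0$ uniformly for all small $s>0$, exhibiting an outer trapped surface strictly outside the outermost $\mathcal{S}$, a contradiction. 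Hence $c=0$ and $\tau-\kappa_{NP}$ must vanish somewhere on $\mathcal{S}$; by analyticity either $\tau\equiv\kappa_{NP}$ on $\mathcal{S}$ or its zero set is finite. In either case $\theta$ and $\frac{d\theta}{ds}$ both vanish at those points of $\mathcal{S}$, and one records in addition that $R_{ab}l^al^b=T_{ab}n^al^b=0$ there.

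The core of the argument is then an induction on the order of the $s$-derivative. Having fixed the deformation, I would differentiate the pushing equations (the analogues of (\ref{nablaexp})--(\ref{needed})) repeatedly, feeding in at each stage the vanishing of all lower-order $s$-derivatives of $\theta$ together with $\Lambda=0$, $c=0$, and the vanishing of the spin-coefficient combinations established at the previous step. The mechanism should be the same at every order: the $k$-th variation $\frac{d^k\theta}{ds^k}\big|_{\mathcal{S}}$ is forced by these substitutions into the form $-e^y(\textrm{nonnegative quantity})$, and outermostness of $\mathcal{S}$ then forbids it from being negative on an open set, so it vanishes wherever the lower-order variations did. Because the metric, and hence $\theta$ along the deformation, is real-analytic in $s$, infinite-order vanishing of $\theta$ along the deformation curves emanating from $\mathcal{S}$ forces $\theta\equiv 0$ on an outer neighbourhood of $\mathcal{S}$ in $V$. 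This produces a one-parameter family of marginally outer trapped surfaces strictly outside $\mathcal{S}$, contradicting that $\mathcal{S}$ is the outermost one; that contradiction proves the corollary.

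The step I expect to be the main obstacle is precisely this induction, together with the borderline case $c=0$, $\tau\equiv\kappa_{NP}$: beyond first order the outward deformation is not canonically determined, so one must either keep careful track of the $k$-th order pushing equation and verify that its right-hand side really retains the ``nonpositive, hence zero'' structure, or replace the hands-on induction by an abstract analytic argument --- noting that on the circle $\mathcal{S}$ the linearization of $\theta$ is a second-order ODE operator whose kernel, when nontrivial, is spanned by a positive function, and checking that the Fredholm solvability conditions governing the power-series recursion for the deforming function are met at every order. It is exactly here, in turning ``all $s$-derivatives of $\theta$ vanish'' into ``$\theta$ vanishes identically on one side of $\mathcal{S}$,'' that the analyticity hypothesis is indispensable and the $3+1$-dimensional gap noted by Galloway \cite{Galloway:1994a} is closed.
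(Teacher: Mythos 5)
Your setup --- outermost marginally outer trapped surface $\mathcal{S}$, the outward deformation $\mathcal{S}(s)$, the sign analysis of the first variation, the Taylor expansion of $\theta$ in $s$, and the final contradiction with outermostness once all coefficients are shown nonpositive or all vanish --- matches the paper's proof. But the step you yourself flag as ``the main obstacle,'' namely establishing that each higher Taylor coefficient $c_k$ of $\theta(s)=\sum_n c_n s^n$ is nonpositive, is exactly where the proposal stops short. The route you sketch (differentiate the pushing equations $k$ times and verify that the right-hand side ``retains the nonpositive structure'') is neither carried out nor obviously workable: already the second variation of $\theta$ generates derivatives of the spin coefficients and of the deformation function $y$ with no manifest sign, and your fallback Fredholm/ODE-kernel argument is only gestured at. As written, the inductive step --- which is the entire content of the corollary beyond Theorem \ref{trapsur} --- is a genuine gap.

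The paper closes this gap without ever computing a higher-order variation. Since $c_0=\dots=c_{k-1}=0$ on $\mathcal{S}$ by the inductive hypothesis, analyticity gives
\begin{equation*}
c_k=\lim_{s\to 0^{+}}\frac{d\theta/ds}{k\,s^{k-1}}\,,
\end{equation*}
and one substitutes into the numerator the \emph{first-order} pushing identity (\ref{thetapush}), which holds along the whole family $\mathcal{S}(s)$ and not just at $s=0$. The term proportional to $\theta$ is $O(s^{k})/s^{k-1}=O(s)$ and drops out in the limit, while the remaining term $-\tfrac{e^{y}}{k s^{k-1}}\bigl((\tau-\kappa_{NP})^{2}+c\bigr)$ is manifestly nonpositive for $s>0$; since the limit exists by analyticity, $c_k\le 0$ follows with no further computation. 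This is the single device your proposal is missing; with it in hand, your argument reduces to the paper's. (Two smaller points of comparison: the quantity $c$ in (\ref{needed}) is a constant fixed by the choice of $y$, so ``$c=0$ or $c>0$'' is a clean dichotomy, and in the remaining branch the paper simply assumes $c$ and $\tau-\kappa_{NP}$ vanish identically on $\mathcal{S}$ rather than tracking a finite zero set as you do.)
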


\begin{proof}
If $c>0$ and/or $\tau-\kappa_{NP}\neq 0$ on $\mathcal{S}$, the conclusion follows immediately from Theorem \ref{trapsur}. Therefore, assume these terms vanish at all points on $\mathcal{S}$. By analyticity,
$\theta$ is  given by its Taylor expansion, \begin{equation*}
\theta(s) = \sum_{n=0}^\infty c_ns^n 
\end{equation*}  and $c_0$ and $c_1$ vanish everywhere on $\mathcal{S}$. Next, observe that
\begin{equation}\label{limit}
c_2=\lim_{s\rightarrow 0^{+}} \frac{\theta} { s^2}= \lim_{s\rightarrow 0^{+}}\frac{d\theta/ds}{2 s} 
\end{equation}
By (\ref{thetapush}) and appropriate choice of $y$ (\ref{limit}) is equivalent to
\begin{equation}
c_2 = \lim_{s\rightarrow 0^{+}} e^y \frac {\theta}{2s}(-\theta + \epsilon - \mu-\gamma) - \frac {e^y}{2s}( (\tau-\kappa_{NP})^2  +c) \ .\end{equation}
The spin coefficients, $c$ and $y$ are analytic functions of $s$ by the assumption of analyticity.
Taking the limit, as the leading term in $\theta$ is $c_2 s^2$,  the first term on the right hand side vanishes, and the second term is manifestly nonpositive. It follows that $c_2\leq 0$ everywhere on $\mathcal{S}$. If $c_2<0$ for any point, there is an outer trapped surface in some neighborhood outside of  $\mathcal{S}$, in contradiction to initial assumption. If $c_2 =0$ at all points on $\mathcal{S}$, then  iteration of the above argument yields the result that if the coefficients $c_{n}$ vanish on $\mathcal{S}$ for all $n\leq k-1$, 
\begin{equation}
c_k =\lim_{s\rightarrow 0^{+}}\frac{d\theta/ds}{k s^{k-1}}=\lim_{s\rightarrow 0^{+}} e^y \frac {\theta}{ks^{k-1}}(-\theta + \epsilon - \mu-\gamma) - \frac {e^y}{ks^{k-1}}( (\tau-\kappa_{NP})^2  +c)\leq 0 \ .\end{equation}
If $c_k<0$ at some point on $\mathcal{S}$, it again follows that  there is an outer trapped surface in some neighborhood of it, in contradiction to initial assumption. If all coefficients vanish identically, then it follows that $\theta$ must be zero somewhere for $s>0$; that is, there exists a marginally outer trapped surface outside of $\mathcal{S}$, again in  contradiction to initial assumption.
\end{proof}

We will next consider the conditions under which the spacetime metric is analytic.  First, a well known result is that static vacuum spacetimes are analytic  \cite{Muller:1970}. This result was extended to the stationary case in \cite{Muller:1971} and explicitly worked out for the case of Einstein-Maxwell gravity in \cite{Tod:2007tb}. These results were proven in 4 dimensions, but can be  generalized to any dimension. As the future horizon and apparent horizon coincide for stationary spacetimes, one might conclude that stationary spacetimes satisfy the conditions of  Corollary \ref{noanalytic}. However, note that the methods used in the proof of the results \cite{Muller:1970,Muller:1971,Tod:2007tb} explicitly exclude the horizon. Thus, strictly speaking, these results  only prove analyticity outside of it. Hence, although all known stationary solutions are in fact analytic on the horizon as well, there is a potential issue in assuming analyticity at the horizon as needed for Corollary \ref{noanalytic}. 

However, in 3 dimensions, one can prove a stronger result that covers this gap for the vacuum case; $2+1$-dimensional vacuum spacetimes are analytic everywhere. This result does not assume staticity or stationarity.

\begin{thm}\label{analy}  
Any $2+1$-dimensional spacetime $M^{2+1}$ with Einstein metric is analytic.
\end{thm}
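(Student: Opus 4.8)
The plan is to reduce the statement to a purely algebraic fact about three-dimensional curvature, and then invoke the classical local classification of constant-curvature spacetimes, whose model spaces are manifestly real-analytic.

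First I would record what ``Einstein metric'' entails: $R_{ab}=\lambda\, g_{ab}$ for some function $\lambda$, and in dimension three the contracted Bianchi identity (Schur's lemma) forces $\lambda$ to be constant; in the case of primary interest the vacuum equations $R_{ab}-\tfrac12 R g_{ab}+\Lambda g_{ab}=0$ fix $\lambda=2\Lambda$ directly. Substituting $R_{ab}=\lambda g_{ab}$ and $R=3\lambda$ into the three-dimensional identity \eqref{3Ricci} collapses every Ricci term and gives $R_{abcd}=\tfrac{\lambda}{2}\bigl(g_{ac}g_{bd}-g_{ad}g_{bc}\bigr)$. Hence $(M^{2+1},g_{ab})$ is a Lorentzian manifold of constant curvature $\lambda/2$. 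Equivalently, since $R_{abcd}$ is now a constant multiple of a polynomial in the parallel tensor $g_{ab}$, one has $\nabla_e R_{abcd}=0$, so $M^{2+1}$ is locally symmetric.

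Next I would apply the classical theorem that a pseudo-Riemannian manifold of constant curvature is locally isometric to the simply connected model space of that curvature: Minkowski space $\mathbb R^{2,1}$ if $\lambda=0$, de Sitter space $dS_3$ if $\lambda>0$, and anti-de Sitter space $AdS_3$ if $\lambda<0$. Each of these model spaces is a real-analytic manifold carrying a real-analytic metric --- $\mathbb R^{2,1}$ trivially, and $dS_3$, $AdS_3$ as analytic quadrics in a flat auxiliary space with analytic induced metric. The required local isometries can be produced from geodesic normal coordinates (the exponential map is as regular as the metric), and on overlaps they differ by restrictions of isometries of the model space, which are analytic; therefore these charts assemble into an analytic atlas, compatible with the given smooth structure, in which $g_{ab}$ is analytic. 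It follows that $M^{2+1}$ is analytic, which also closes the gap in the hypotheses of Corollary \ref{noanalytic} for the vacuum case.

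I do not anticipate a substantive obstacle here: the content is the elementary algebra that turns the Einstein condition into constant curvature in three dimensions, together with the standard local model theorem. The only point needing care is regularity bookkeeping --- one must verify that the a priori regularity of the metric (we take $C^\infty$, though $C^2$ suffices) is enough for the curvature identities to hold, for the construction of normal coordinates, and for the transfer of the analytic structure across charts. As an alternative to the model-space argument, the same conclusion can be read off directly from local symmetry: in normal coordinates the metric admits a closed-form convergent expansion whose Taylor coefficients are universal polynomials in $R_{abcd}$ alone (all higher covariant derivatives vanishing), and such an expression is manifestly real-analytic.
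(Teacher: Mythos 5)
Your proposal is correct and follows essentially the same route as the paper: both reduce the Einstein condition via the three-dimensional Riemann--Ricci identity \eqref{3Ricci} to constant sectional curvature (a local space form) and then conclude analyticity from the local model. The only difference is cosmetic --- the paper simply exhibits the explicit conformally flat local metric \eqref{spaceform} and observes it is analytic, whereas you invoke the model-space classification and check compatibility of the charts on overlaps, which if anything is slightly more careful about what ``the spacetime is analytic'' requires.
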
 
\begin{proof} 
An Einstein metric in $2+1$ gravity satisfies
\begin{equation*}
R_{ab}={2 \Lambda }  g_{ab} \ .
\end{equation*}
where $\Lambda$ is the cosmological constant.
The $3$-dimensional Riemann tensor (\ref{3Ricci}) is then
\begin{equation*}
R_{a bcd} 
= {\Lambda} (g_{b d}g_{a c} - g_{b c}g_{a d})  
 \, .
\end{equation*}
\noindent Note that the Riemann curvature tensor is parallel because 
$\nabla _a {R_{b c d e}}=0$.
Next, recall that the sectional curvature is defined by
\begin{equation*}
K(X,Y)={\frac{R_{abcd}X^a Y^b X^c Y^d }
{(g_{ac }g_{b d } -
 g_{a d}g_{bc})X^a Y^bX^c Y^d}} \, .
\end{equation*}
The sectional curvature for an Einstein metric in 3 dimensions is thus simply
\begin{equation*}
K(X,Y) ={ \Lambda} \, .
\end{equation*}

\noindent Therefore, a  $2+1$-dimensional spacetime with Einstein metric  is a spacetime of constant sectional curvature, that is  a {\it local
space form}. In general, one can pick local coordinates  at each point  such that the metric can be
written as
 \begin{equation}
 \label{spaceform}
ds^2= {\frac{-dT^2+du^2+dv^2} {{ 1+ {{\frac 14}{\Lambda}}  (u^2+v^2-T^2)}^2}} \, .
\end{equation} 
This metric is thus analytic in a neighborhood of the point as the conformal factor is an analytic function. 
\end{proof}

This result is entirely a consequence of  the dimensionality. In general, Einstein metrics in 4 or more spacetime dimensions have non-zero Weyl curvature. Physically,
this is due to the existence of gravitational waves.  Therefore the analogue of Theorem \ref{analy} will not hold in general for Einstein metrics in four or more spacetime dimensions. However, the same proof and conclusions are true for a particular Einstein metric in any dimension if the Weyl curvature of that metric vanishes.

Theorem \ref{analy} shows that any $2+1$-dimensional spacetime with Einstein metric is not only analytic but is also a local space form. 
Furthermore, Theorem \ref{analy} is true regardless of whether or not the spacetime is spatially compact or open. 
Moreover, it applies even if the metric is not globally hyperbolic, for example if the spacetime has 
closed timelike curves.
Finally, the spacetime in Theorem \ref{analy} is not assumed to be complete. If the spacetime is complete, 
then it must be obtained via a discrete group action on a maximal symmetric space (See, for example, \cite{wolf}).
However, if the spacetime is not complete, the global structure of the spacetime is much more complicated 
 than that given by the simple local metric (\ref{spaceform}).
This was first pointed out by Morrow-Jones and Witt for the case of positive cosmological constant in 4 dimensions  \cite{MorrowJones:1988yw, MorrowJones:1993zu}. Later interest by
mathematicians  in $2+1$-dimensional Lorentzian spacetimes \cite{mess} 
 was spurred by connections between spacetime structures, conformal structures and the work of Thurston.  

Theorem \ref{analy} shows that  the conditions of Corollary \ref{noanalytic} to Theorem \ref{trapsur} always hold for a vacuum AF spacetime. Therefore, there are no outer trapped surfaces in $2+1$-dimensional vacuum AF spacetimes. 

\section{Geon Spacetimes and Horizons 2+1 Gravity}\label{sec4}

We have established that there are no outer trapped surfaces in $2+1$-dimensional vacuum AF spacetimes. However, this fact does not itself show that there are no horizons. A standard result in causal structure is that if a globally hyperbolic AF spacetime contains an outer trapped surface, this surface must lie inside a future event horizon \cite{he}. However, the converse is not true. 
A simple illustration of this,  a slicing of Schwarzschild spacetime in which there are no outer trapped surfaces
 in the black hole region of the Cauchy surfaces was exhibited by Wald and Iyer \cite{Wald:1991}. However, one can prove, using topological censorship, that  2+1 AF or ALADS geon spacetimes must have horizons. Furthermore, these spacetimes also must contain an outer trapped surface.  These theorems, proven below, allow us to conclude our main result, the nonexistence of AF vacuum geon spacetimes.

We begin with the definition of a geon spacetime:
\begin{Def}
A globally hyperbolic spacetime $M$ 
 is a $2+1$-dimensional {\it geon} if its 
Cauchy surface $V$ has interior homeomorphic to $\Pi - \{p\}$ where $\Pi$ is a closed 2-manifold other than the 2-sphere $S^2$, and $\{p\}$ is a 
 point. 
 \end{Def}
 Note that a closed manifold is alternately termed a compact manifold without boundary.
An AF geon is a geon that also satisfies the definition of an AF spacetime. An ALADS geon similarly  also satisfies the definition of an ALADS spacetime.

To prove our next result,  we will utilize a key corollary of the topological censorship Theorem  for $2+1$-dimensional spacetimes \cite{Galloway:1999br}. In  Theorems \ref{topcen} and \ref{topcenhor} below, the energy condition used is the NEC. However, the topological censorship theorems  hold under a form of the weaker, averaged null energy condition (ANEC).\footnote{ Specifically, the form of ANEC used is that for each point $p$ in $\cal M$ near ${\scri}$ and any
future complete null
geodesic $s\to\eta(s)$ in $\cal M$ starting at $p$ with tangent $l$, $\int_0^{\infty}R_{ab}l^al^b\,ds\ge
0$.} Clearly, Theorems \ref{topcen} and \ref{topcenhor} hold under the more general energy condition, ANEC. However, the DEC implies  the NEC; therefore it  
suffices for the goals of this paper to state these theorems using the NEC.

\begin{thm}\label{topcen}  Let $D$ be the DOC of a globally hyperbolic, $2+1$-dimensional AF or ALADS spacetime  satisfying the NEC, and let $V'=V\cup\partial V$ be a Cauchy surface in the Penrose compactification of $D$. Then  
$V$ is either $B^2$ (a disk) or $I\times S^1$ (an annulus).
\end{thm}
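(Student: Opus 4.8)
The plan is to derive the topology of the Cauchy surface $V$ of the DOC from topological censorship via a covering-space/fundamental-group argument, combined with the classification of surfaces. Topological censorship for $2+1$-dimensional AF and ALADS spacetimes \cite{Galloway:1999br} asserts that, under the NEC (or ANEC), the DOC $D'$ is simply connected, or more precisely that every causal curve from $\scri$ back to $\scri$ is fixed-endpoint homotopic to a curve lying in $\scri$; the sharp consequence one extracts is that the inclusion $\scri_0 \hookrightarrow D'$ induces a surjection on fundamental groups, and in fact that $D'$ is simply connected. First I would invoke the fact (noted in the excerpt just before Theorem~\ref{topcen}) that $D'$ is a globally hyperbolic spacetime-with-boundary, so by the standard splitting theorem $D' \cong \mathbb{R}\times V'$ where $V'$ is a Cauchy surface with boundary $\partial V' = \partial V \subset \scri$. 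Hence $\pi_1(V') \cong \pi_1(D')$, and topological censorship forces $V'$ to be simply connected.

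Next I would pin down $\partial V'$. Since $V'$ is a Cauchy surface of $D'$ and $\scri$ has the product form required by the AF or ALADS definitions (topology $S^1\times\mathbb{R}$ for each component in the AF case, a timelike Lorentzian surface in the ALADS case), the slice $\partial V' = V'\cap\scri$ is a disjoint union of circles, one per relevant component of $\scri$. So $V'$ is a compact (the Penrose compactification makes $V'$ compact, modulo the usual care about whether one wants $V'$ itself compact — in the AF case $V'$ is the one-point–or–circle compactification) connected orientable $2$-manifold with boundary a disjoint union of circles and with trivial fundamental group. Now I would apply the classification of compact surfaces with boundary: a connected orientable surface with $b\ge 1$ boundary circles has free fundamental group of rank $2g + b - 1$, which vanishes only when $g=0$ and $b=1$, giving the disk $B^2$, or $g=0$ and $b=1$... — the one case that must be handled separately is $b=2$, $g=0$, the annulus $I\times S^1$, which has $\pi_1\cong\mathbb{Z}$, \emph{not} trivial. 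The resolution is that in the ALADS (or AF-with-both-$\scri^\pm$-components-in-play) situation the relevant statement of topological censorship is not that $D'$ is simply connected but that the inclusion of \emph{each} component of $\scri$ is $\pi_1$-onto; when $\scri$ has two components joined through $D'$, $\pi_1(V')$ can be $\mathbb{Z}$, generated by a curve running from one boundary circle to the other. So the dichotomy is: either $V'$ has one boundary circle, forcing the disk, or $V'$ has two boundary circles and $\pi_1(V')$ is generated by an arc between them, forcing $g=0$, $b=2$, the annulus.

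The main obstacle I anticipate is stating the consequence of topological censorship in exactly the right form — distinguishing "$D'$ simply connected" (which holds when $\scri$ is connected, e.g. a single AF end $\scri^+_0$) from the weaker "each $\scri_0\hookrightarrow D'$ is $\pi_1$-surjective" (needed when $\scri$ has several components) — and then checking that these are the only two ways to kill the rank of the free group $\pi_1$ of an orientable bordered surface. Concretely I would: (i) recall the precise corollary of \cite{Galloway:1999br}; (ii) use global hyperbolicity of $D'$ to get the product structure and identify $\pi_1(V')$ with $\pi_1(D')$; (iii) use the asymptotic structure to see $\partial V'$ is a union of circles; (iv) run the surface classification to conclude $V'\in\{B^2,\ I\times S^1\}$, hence $V\in\{B^2,\ I\times S^1\}$ after stripping the boundary at infinity. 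The remaining bookkeeping — orientability of $V$ (inherited from time-orientability plus orientability of the ambient spacetime, or else argued directly), and that no nontrivial closed surface can be a connected summand without violating simple-connectedness of the end — is routine.
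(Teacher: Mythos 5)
Your overall route is the paper's: topological censorship controls $\pi_1$ of the compactified DOC, global hyperbolicity transfers this to the Cauchy surface $V'$, and the classification of compact orientable surfaces with boundary leaves only the disk and the annulus. However, the key step is misstated in a way that matters. Topological censorship in $2+1$ dimensions does \emph{not} say that $D'$ is simply connected; it says only that the inclusion $\sigma_\infty=\partial V'\hookrightarrow V'$ induces a surjection $i_*:\pi_1(\sigma_\infty)\to\pi_1(V')$. (Simple connectivity of $D'$ is what one gets in $3+1$ dimensions, where $\scri_0\simeq S^2\times\mathbb{R}$ is itself simply connected; here $\scri_0\simeq S^1\times\mathbb{R}$ is not.) If $D'$ were simply connected the annulus would be excluded, contradicting the theorem's own conclusion and the BTZ black hole, whose DOC has annular Cauchy surfaces. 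You notice this tension but resolve it incorrectly: you claim the annulus occurs only when $\scri$ has two components joined through $D'$, with $\pi_1(V')\cong\mathbb{Z}$ ``generated by a curve running from one boundary circle to the other.'' An arc between two boundary circles is not a loop and generates nothing in $\pi_1$; and in the paper's setup the DOC is defined per connected component $\scri_0$, so $\partial V'=\sigma_\infty$ is a \emph{single} circle (or a line, in the ALADS case $\scri_0\cong\mathbb{R}^2$, which you omit). The annulus is perfectly compatible with $\pi_1$-surjectivity from that single circle, because a boundary circle of $I\times S^1$ is homotopic to the core and generates $\pi_1\cong\mathbb{Z}$; the second boundary component of the annulus sits at the horizon, not on a second piece of $\scri$.

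The correct finish, as in the paper: $\pi_1(\sigma_\infty)=\mathbb{Z}$ surjects onto $\pi_1(V')$, so $\pi_1(V')\cong\mathbb{Z}/s\mathbb{Z}$ for some $s\ge 0$; but $V'$ is a compact orientable surface with boundary, whose fundamental group is free of rank $2g+b-1$, and the only free groups that arise as quotients of $\mathbb{Z}$ are $1$ and $\mathbb{Z}$. Hence $(g,b)=(0,1)$ or $(0,2)$, i.e.\ $V$ is $B^2$ or $I\times S^1$; in the $\mathbb{R}$-boundary ALADS case $\pi_1(V')=1$ and only $B^2$ survives.
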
 

\begin{proof} (We provide here an extended, alternate proof to that in \cite{Galloway:1999br}.)  Let $V'$ be the 2-dimensional Cauchy surface in the Penrose compactified spacetime $D'=D\cup\scri$  with boundary at spatial infinity $\partial V =\sigma_{\infty}$. As $D'$ satisfies the conditions needed to prove topological censorship, it follows that the homomorphism of fundamental groups $i_*: \pi_1(\sigma_{\infty}) \to \pi_1(V')$ induced by inclusion is 
surjective and $D'$ is orientable if $\scri$ is orientable \cite{Galloway:1999bp,Galloway:1999br}. 
That is, the sequence $\pi_1(\sigma_{\infty}) \to \pi_1(V') \to 1$ is exact.

In 3 dimensions, each connected component of $\scri$  has topology $S^1\times {\mathbb R}$ for  AF geons and
$S^1\times {\mathbb R}$ or ${\mathbb R}^2$ for  ALADS geons. 
Therefore, the only choices for the spatial boundary topology at infinity are $S^1$, the only closed connected 1-dimensional manifold, and $\mathbb R$. First consider the case where the spatial boundary at infinity is $S^1$. Then $\pi_1(S^1) \to \pi_1(V')\to 1$ and $V'$ is orientable. 
Since $\pi_1(S^1) ={\mathbb Z}$, it follows that  ${\mathbb Z} \to \pi_1(V')\to 1$. This exact 
sequence  and basic group theory imply that  $\Pi_1(V')= {\mathbb Z}/{\rm ker\  i_*}$. Since the 
kernel of $i_*$ must be a subgroup of the integers, 
${\rm ker\  i_*}\subset {\mathbb Z}$,  it follows that ${\rm ker\  i_*} = s {\mathbb Z}$ where s is a fixed non-negative 
integer. Therefore, $\pi_1(V')= {\mathbb Z}/ s {\mathbb Z}= {\mathbb Z}_s$. Now, from the classification of 2-manifolds, $V'$ must be a closed orientable 2-manifold  minus a disjoint
union of one or more disks.  It follows from the classification of 2-manifolds and their fundamental groups that $s=0$ or $s=1$ and
that the only choices for $V$ are $B^2$ (a disk) 
or $I\times S^1$ (an annulus). Finally, consider the case where the spatial boundary at infinity topology is $\mathbb R$. It immediately follows from the fact that $\mathbb R$ is contractible and  $\pi_1({\mathbb R}) \to \pi_1(V')\to 1$ that $V'$ is also contractible, 
$\pi_1(V')= 1$. Hence, for this case, the only choice for $V$ is $B^2$.
\end{proof}

It is important to note that  this theorem determines the spatial topology of the DOC, the 
region of  spacetime which can be probed from infinity, not that of the full spacetime. Thus
it  allows the Cauchy slice of the full spacetime $M^{2+1}$ to have non-trivial topology; this topology, however, will be behind a single horizon. This result is shown below.

\begin{thm}\label{topcenhor}  Let $M^{2+1}$ be a globally hyperbolic AF or ALADS geon spacetime  satisfying NEC, 
and let $V$ be a Cauchy surface in $M^{2+1}$. Then the spacetime has a horizon.

\end{thm}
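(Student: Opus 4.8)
The plan is to derive a contradiction from assuming that $M^{2+1}$ has no horizon, which forces the domain of outer communications $D$ to coincide with all of $M^{2+1}$, and then to apply Theorem \ref{topcen} to constrain the topology of the Cauchy surface in a way that contradicts the geon hypothesis. First I would recall that the event horizon, being the boundary of the DOC, is empty precisely when $D = M^{2+1}$; equivalently, every point of the spacetime is in causal contact with $\scri$. So suppose for contradiction that there is no horizon. Then $D = M^{2+1}$, and a Cauchy surface $V$ for $M^{2+1}$ is (up to the compactification at infinity) a Cauchy surface $V'$ for $D' = D \cup \scri$. By Theorem \ref{topcen}, $V$ must then be either a disk $B^2$ or an annulus $I \times S^1$.

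Next I would compare this with the topology demanded by the geon definition. By definition, the interior of $V$ is homeomorphic to $\Pi - \{p\}$, a closed 2-manifold $\Pi \neq S^2$ with one puncture. A punctured disk is $S^2$ minus two points, hence $\Pi = S^2$, contradicting $\Pi \neq S^2$; a punctured annulus is also a planar surface (it is $S^2$ minus three points, or equivalently a sphere with three punctures), which again is not of the form $\Pi - \{p\}$ for a closed $\Pi$ unless $\Pi = S^2$. So neither allowed possibility from Theorem \ref{topcen} is compatible with being an AF or ALADS geon. I should be slightly careful about the roles of the point $p$ at infinity versus the puncture in $\Pi - \{p\}$: the asymptotic end of the geon is the puncture, and the Penrose/conformal completion $V'$ fills that puncture back in with $\partial V = \sigma_\infty$, so $V'$ is homeomorphic to $\Pi$ itself minus possibly some open collar, i.e. $\Pi$ with one boundary circle. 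The content of Theorem \ref{topcen} is then that $\Pi$ capped this way is a disk or annulus, forcing $\Pi = S^2$; this is the contradiction.

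I would then conclude: since assuming no horizon forces $V$ into the short list of Theorem \ref{topcen}, and that list is incompatible with the geon topology $\Pi - \{p\}$, $\Pi \neq S^2$, the spacetime must contain a horizon. I would remark that this argument works identically for the AF and ALADS cases because Theorem \ref{topcen} already treats both, and that it uses only the NEC (or, as noted, ANEC), which is implied by the DEC used elsewhere.

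The main obstacle I anticipate is being precise about the interplay between the ``missing point'' in the geon's Cauchy surface and the conformal boundary $\sigma_\infty$ attached in forming $V'$ — i.e., making rigorous the claim that ``no horizon'' implies the full Cauchy surface $V$ (not merely some region exterior to a trapped surface) is the one controlled by Theorem \ref{topcen}, and that filling in the asymptotic puncture yields exactly the compactified Cauchy surface $V'$ appearing there. Once that identification is nailed down, the topological bookkeeping (a one-holed $\Pi$ being a disk or annulus forces $\Pi = S^2$) is routine from the classification of surfaces.
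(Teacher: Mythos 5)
Your proposal is correct and follows essentially the same route as the paper: assume no horizon, so the DOC is all of $M^{2+1}$, invoke Theorem \ref{topcen} to force the Cauchy surface to be $B^2$ or $I\times S^1$, and observe that neither is of the form $\Pi - \{p\}$ for a closed $\Pi \neq S^2$. (Your phrase ``a punctured disk is $S^2$ minus two points'' momentarily conflates the interior of $B^2$ with a punctured open disk, but your subsequent paragraph on the role of $\sigma_\infty$ sorts out the bookkeeping correctly, matching the paper's identification of the interiors as $S^2-\{p\}$ and $\mathbb{R}^2-\{p\}$.)
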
 

\begin{proof} By the definition of a geon, the interior of the Cauchy surface is $\Pi - \{p\}$. Assume that the geon spacetime has no horizon; it follows that the DOC of $M^{2+1}$ is itself  $M^{2+1}$.
However, according to Theorem \ref{topcen}, the interior of the Cauchy surface of $M^{2+1}$ is then either $B^2$ or $I\times S^1$. These spaces are topologically $S^2 - \{p\}$ or  ${\mathbb{R}}^2 - \{p\}$. Hence neither contain any nontrivial topology and are not the Cauchy 
surface of a geon, in contradiction to the initial assumption.  Therefore the geon spacetime has a horizon.
\end{proof}

Theorems \ref{topcen} and  \ref{topcenhor} imply that if a spacetime has any non-trivial topology, it is hidden inside a
horizon. Moreover, there is never more than one black hole horizon bounding the DOC of each connected component of $\scri$ in any $2+1$-dimensional AF or ALADS spacetime. Hence an observer in the asymptotic region characterizes the spacetime as containing a single black hole.\footnote{ In 2+1 gravity, Brill \cite{Brill:1995jv} uses the terminology multi-black hole spacetime to describe the structure certain 2+1 ALADS spacetimes. This terminology does not mean multiple black hole horizons in a single asymptotic region. Multi-black hole spacetimes, as clearly noted by Brill, in fact have multiple disconnected components of $\scri$ with a single
black hole horizon bounding the DOC of each disconnected component.  Thus they clearly satisfy our theorems.}
This still leaves open the question of whether or not there exist geon solutions of the Einstein equations that exhibit horizons. The answer is clearly yes for the ALADS case; the spinning anti-de Sitter wormhole \cite{Aminneborg:1998si} is such a geon. Another easily exhibited geon is the $RP^2$ geon, the 3-dimensional analog of the $RP^3$ geon of \cite{Friedman:1993ty}. It can be built from the standard $2+1$-dimensional AdS black hole solution  \cite{Banados:1992wn} with spatial topology $S^1\times \mathbb R$ by identification of the two asymptotic regions to form a single one. It contains a $t=0$ hypersurface formed from that of the AdS black hole by identification of antipodal points on its totally geodesic $S^1$ throat. This hypersurface consequently has the topology of $RP^2 - \{ p\}$. Other ALADS solutions, including those whose Cauchy surfaces have nontrivial topology and have more than one asymptotic region, are also  examples with horizons \cite{Brill:1995jv,Aminneborg:1997pz}. 

Therefore the key question is whether or not there exist geon solutions for the case of zero cosmological constant. To answer this equation we first establish that all AF geons contain an outer trapped surface.

\begin{thm}\label{nooutertrapped}   
Every globally hyperbolic, AF geon that satisfies the NEC has a cover whose Cauchy surface contains an outer trapped surface.
\end{thm}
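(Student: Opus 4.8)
The plan is to exploit the word ``cover'' appearing in the statement. Since the geon's Cauchy surface $V$ has a single asymptotic end but nontrivial (free) fundamental group, one can pass to a finite cover in which the one end unwraps into two asymptotic ends; the ``outer'' direction associated with one of these ends then makes circles near the \emph{other} end trapped, which produces a non-empty proper trapped region and hence an outermost marginally outer trapped surface.

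First I would set up the cover. By the definition of a geon, $\mathrm{int}\,V\cong\Pi-\{p\}$ with $\Pi$ a closed surface other than $S^2$, so $\pi_1(V)$ is free of rank $\ge 1$ and $H_1(V;\mathbb{Z}_2)\neq 0$; moreover the loop $\sigma_\infty$ encircling the puncture bounds a disk in $\Pi$ and is therefore trivial in $H_1(V;\mathbb{Z}_2)$ (equivalently $w_1(\sigma_\infty)=0$). Choose a surjection $\rho\colon H_1(V;\mathbb{Z}_2)\to\mathbb{Z}_2$ and let $\hat V\to V$ be the associated connected double cover; since $\rho(\sigma_\infty)=0$, the single end of $V$ lifts to exactly two ends of $\hat V$, each again of the form $S^1\times[0,\infty)$. (If $V$ is nonorientable one may instead, or in addition, pass to the orientation double cover, which also detects $\sigma_\infty$ trivially, thereby arranging $\hat V$ to be orientable with two ends.) This covering of surfaces is induced by a covering $\hat M\to M$ of spacetimes: $\hat M$ is globally hyperbolic with Cauchy surface $\hat V$, it satisfies the NEC because the covering map is a local isometry, and it is asymptotically flat with $\scri$ a disjoint union of two copies of $S^1\times\mathbb{R}$ with complete null generators. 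Thus $\hat M$ is an AF spacetime with two asymptotic ends $\scri_1,\scri_2$, and this is consistent with Theorem \ref{topcenhor}, which already forces $\hat M$ to possess a horizon.

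Next I would produce the trapped surface in $\hat V$. Fix $\scri_1$ as defining the outer direction. As noted in the discussion following Theorem \ref{trapsur}, circles near an asymptotic boundary are untrapped; applied near $\scri_2$, this says that for a circle close to $\scri_2$ the null expansion \emph{in the direction of $\scri_1$}, which is the ingoing direction with respect to $\scri_2$, is negative, so such circles are outer trapped with respect to $\scri_1$. On the other hand, circles near $\scri_1$ are not outer trapped. Hence the trapped region of $\hat V$ relative to $\scri_1$ is non-empty and is not all of $\hat V$; exactly as in the proof of Theorem \ref{trapsur}, it then has an outer boundary, which is the outermost marginally outer trapped surface and in particular an outer trapped surface contained in the Cauchy surface $\hat V$ of the cover $\hat M$.

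The main obstacle, requiring care rather than routine computation, is the topological bookkeeping of the cover: one must verify that $[\sigma_\infty]$ is genuinely null in mod-$2$ homology for \emph{every} admissible $\Pi$ (orientable and nonorientable alike), that the chosen cover is \emph{finite} so that the lifted $\scri$ still has the mandated topology $S^1\times\mathbb{R}$ rather than its line cover $\mathbb{R}\times\mathbb{R}$, and that the two lifted ends are genuinely distinct components rather than one reconnected end. The remaining analytic input -- that a non-empty proper trapped region admits an outermost MOTS -- is the same elementary statement about circles in a surface already used in Theorem \ref{trapsur}, so no additional machinery is needed there.
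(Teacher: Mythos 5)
Your proposal is correct and rests on the same key idea as the paper's proof: pass to a cover of the geon whose Cauchy surface has at least two asymptotically flat ends, note that large circles near a second end $\scri_2$ have negative expansion in the direction of the distinguished end $\scri_1$ (since ``toward $\scri_1$'' is ``inward with respect to $\scri_2$''), and conclude that such circles are outer trapped with respect to $\scri_1$. Where you differ is in the choice of cover. The paper takes the universal cover $\hat\Pi$ of the closed surface $\Pi$ and deletes the full preimage of the puncture $p$, which produces a (possibly infinite) collection of AF ends, each still of the form $S^1\times[0,\infty)$ because only $\Pi$, not $V$, is unwrapped; the paper explicitly warns against using the universal cover of $V$ itself for exactly the reason you identify (it would unwind $\sigma_\infty$ and destroy the $S^1\times\mathbb R$ structure of $\scri$). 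Your connected double cover associated with a surjection $H_1(V;\mathbb Z_2)\to\mathbb Z_2$ killing $[\sigma_\infty]$ buys something the paper's construction does not: it is always finite, yields exactly two ends, and avoids the slightly awkward infinite-ended covers (e.g.\ $\mathbb R^2-\mathbb Z^2$ for the torus) that arise from the universal cover of $\Pi$. One small repair to your justification: $\sigma_\infty$ does bound a disk in $\Pi$, but that disk contains $p$ and so does not lie in $V$; the correct reason $[\sigma_\infty]=0$ in $H_1(V;\mathbb Z_2)$ is that $\sigma_\infty$ bounds the \emph{complementary} piece $\Pi\setminus\mathrm{int}\,D$ as a $\mathbb Z_2$-chain in $V$ (over $\mathbb Z$ this is the relation $\prod[a_i,b_i]$ or $\prod a_i^2$, which in either case vanishes mod $2$), so your conclusion stands for all admissible $\Pi$, orientable or not.
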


\begin{proof}
Assume that there exists $M^{2+1}$, an globally hyperbolic AF geon.
First, the DEC implies the NEC; therefore Theorem \ref{topcenhor}
implies $M^{2+1}$ must have all topology behind a horizon. We now show a contradiction to the existence of a horizon for the AF case. 
We do so by showing that the existence of a horizon implies that there is an outer trapped surface in a AF spacetime that is a cover of the geon. Let the Cauchy surface of $M^{2+1}$ be $V$. As $M^{2+1}$ is globally hyperbolic , $\pi _1 (M^{2+1})=\pi _1(V)$. 
As $M^{2+1}$ is a geon,  $\pi _1(V)\neq 1$. By definition, $V=\Pi-\{p\}$
where $\Pi$ is a closed 2-manifold other than $S^2$. Therefore
 $\Pi$ must be multiply-connected, and as such has a universal covering space  $\hat\Pi$.
 
Let $\hat{\{ p \}}$ be the inverse image of the point $\{p\}$ in $\hat\Pi$. Removing  the  points $\hat{\{ p \}}$  from $\hat{\Pi}$ gives us a space $\hat V$. $\hat V$ is a covering space of $V$ and has multiple disconnected asymptotic regions. Let  $\hat M^{2+1}$ be the corresponding covering space of  $M^{2+1}$ induced by the multiple 
covering of the Cauchy slice $V$.  The geometries of $\hat V$ and $\hat M^{2+1}$ are locally the same as 
$V$ and $M^{2+1}$, respectively. Therefore $\hat M^{2+1}$ will have multiple disconnected asymptotic regions each of which are AF.

Let $\scri_0$ be a connected component of the boundary at infinity of $\hat M^{2+1}$ corresponding to
the  $\scri$ of $M^{2+1}$. By Theorem \ref{topcenhor}, the other multiple disconnected asymptotic regions are separated from $\scri_0$ by horizons. Let $\scri_i$ be the boundary at infinity of one such disconnected region.  As the spacetime is AF, one can find large spacelike circles in the neighborhood of $\scri_i$ that have  expansion $\theta$ with the sign of that in Minkowski spacetime; positive outward and negative inward where outward is defined as the radial spatial direction toward $\scri_i$.  This surface is outer trapped with respect to $\scri_0$ as any curve from this surface to $\scri_0$ must travel radially inward, that is away from $\scri_i$.  As the expansion in this direction is negative, this large circle is outer trapped with respect  to $\scri_0$. As the Cauchy surface can be chosen such that it contains this large circle, there is an outer trapped surface in $\hat V$.\footnote{Observe that the construction in \cite{Wald:1991} shows a special slicing of extended Schwarzschild
that does not exhibit outer trapped surfaces in the black hole region of $r\leq 2M$.   These Cauchy surfaces clearly contain outer trapped surfaces with respect to $\scri_0$; they have $r>2M$ and lie in the second asymptotic region. Their existence is irrelevant for the purpose of the Wald and Iyer example as black holes formed from collapse do not have a second asymptotic region.  Furthermore, other slicings of Schwarzschild have outer trapped surfaces in the black hole region.}\end{proof}

Also note that Cauchy surfaces containing outer trapped surfaces must also exist for ALADS spacetimes in 3 dimensions by a similar argument.

It is important to note that this proof does not use the universal cover of $V$. In 3-dimensional AF spacetimes,
each connected component of  $\scri $ is not itself simply connected,  in contrast to the case in 4 or more spacetime dimensions. This fact means that the  construction of the
universal cover of $V$ unwinds not only nontrivial curves that characterize  the geon itself but also those that wrap $\scri$. This results in  
an unnecessarily complicated covering space with very different structure than that needed to prove  Theorem \ref{nooutertrapped}  and its corollaries.

Given that the existence of a horizon in a AF geon spacetime implies the existence of an outer trapped surface, we can prove our final results.

\begin{cor}  \label{nogeon2}
There are no globally hyperbolic, AF geons $M^{2+1}$ that satisfy the vacuum Einstein equations.
\end{cor}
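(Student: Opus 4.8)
The proof is a direct chaining of the results of Sections \ref{sec3} and \ref{sec4}. The plan is to argue by contradiction: suppose $M^{2+1}$ is a globally hyperbolic AF geon satisfying the vacuum Einstein equations. The vacuum condition sets $T_{ab}=0$, which trivially satisfies the DEC and hence the NEC, and since the asymptotic structure in the definition of an AF spacetime — an $\scri$ arising from a conformal factor with nonvanishing \emph{null} gradient and generators of topology $S^{1}\times\mathbb R$ — is precisely that of vanishing cosmological constant, we also have $\Lambda=0$ (so in fact $R_{ab}=0$).

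First I would apply Theorem \ref{nooutertrapped}. Because $M^{2+1}$ satisfies the NEC, it has a cover $\hat M^{2+1}$ — the one induced by the universal cover $\hat\Pi$ of the closed surface $\Pi$ underlying the Cauchy surface — whose Cauchy surface $\hat V$ contains an outer trapped surface, namely a large circle near one of the extra asymptotic ends $\scri_i$, trapped with respect to a fixed end $\scri_0$. The point to verify is that $\hat M^{2+1}$ still satisfies all the hypotheses of the nonexistence results of Section \ref{sec3}: it is locally isometric to $M^{2+1}$, so it is again a $2+1$-dimensional vacuum spacetime with $\Lambda=0$; each of its ends is AF; and, being a covering of a globally hyperbolic spacetime (the lift of a Cauchy surface is again a Cauchy surface), it is itself globally hyperbolic. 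Its AF asymptotics also guarantee, as noted after Theorem \ref{trapsur}, that the trapped region of $\hat V$ does not exhaust $\hat V$, so an outermost marginally outer trapped surface exists.

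Next I would invoke Theorem \ref{analy}: the metric of $\hat M^{2+1}$ is an Einstein metric in $2+1$ dimensions, hence analytic (equivalently, analyticity is a local property, so it is inherited directly from $M^{2+1}$). Thus $\hat M^{2+1}$ meets the hypotheses of Corollary \ref{noanalytic} — it is analytic, has $\Lambda=0$, and obeys the DEC — and that corollary asserts there are no outer trapped surfaces in $\hat M^{2+1}$. This contradicts the conclusion of Theorem \ref{nooutertrapped} that $\hat V$ contains one, and the contradiction shows that no such geon can exist.

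The only delicate point — and it has essentially been dispatched in the statements being cited — is the bookkeeping about asymptotics on the cover: one must be sure that ``outer trapped'' is meaningful there, which it is relative to the distinguished end $\scri_0$, and that passing to $\hat M^{2+1}$ preserves global hyperbolicity, the vacuum condition, and asymptotic flatness, so that Theorem \ref{analy} and Corollary \ref{noanalytic} genuinely apply. Once those structural checks are in place, the corollary follows immediately from the chain Theorem \ref{nooutertrapped} $\Rightarrow$ outer trapped surface in a vacuum AF cover $\Rightarrow$ contradiction with Corollary \ref{noanalytic} via Theorem \ref{analy}.
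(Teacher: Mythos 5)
Your proposal is correct and follows essentially the same route as the paper: assume such a geon exists, invoke Theorem \ref{nooutertrapped} to produce an outer trapped surface in the cover $\hat M^{2+1}$, then use Theorem \ref{analy} to establish analyticity so that Corollary \ref{noanalytic} (via Theorem \ref{trapsur}) forbids any outer trapped surface, yielding the contradiction. Your additional bookkeeping — checking that the cover inherits the vacuum condition, asymptotic flatness, and global hyperbolicity — is a sound elaboration of what the paper leaves implicit.
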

\begin{proof}
If such an $M^{2+1}$ exists, Theorems \ref{topcenhor} and \ref{nooutertrapped} imply that it has a horizon and its cover has an outer trapped surface. Theorem \ref{trapsur} and its Corollary \ref{noanalytic} apply as vacuum AF spacetimes in 3 dimensions are analytic by Theorem \ref{analy}. Thus no Cauchy surface of the covering space $\hat M^{2+1}$ of the AF geon can contain an outer trapped surface. Hence, there is a contradiction to Theorem \ref{nooutertrapped}. Thus such a geon cannot exist.
\end{proof}

This corollary can readily be generalized to other cases of interest;
\begin{cor}  
There are no globally hyperbolic, AF geons that satisfy the vacuum Einstein-Maxwell equations.
\end{cor}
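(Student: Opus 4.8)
The plan is to reproduce the proof of Corollary~\ref{nogeon2} almost verbatim, with ``source-free Einstein'' replaced by ``source-free Einstein--Maxwell.'' Two facts make this work. First, the electromagnetic stress tensor $T_{ab}=\tfrac1{4\pi}\bigl(F_{ac}F_b{}^c-\tfrac14 g_{ab}F_{cd}F^{cd}\bigr)$ obeys the DEC, hence the NEC, so Theorems~\ref{topcenhor}, \ref{nooutertrapped} and \ref{trapsur} all apply unchanged. Second, although the appeal to analyticity used in the pure-vacuum case is no longer available---a $2+1$-dimensional Einstein--Maxwell metric need not be analytic, since in three dimensions a source-free Maxwell field carries a propagating degree of freedom (locally it is Hodge-dual to a solution of the massless wave equation)---the Maxwell field enters the focusing identity of Theorem~\ref{trapsur} with a favorable sign, which turns out to suffice.

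Concretely, suppose a globally hyperbolic AF geon $M^{2+1}$ solving the source-free Einstein--Maxwell equations exists. By Theorem~\ref{topcenhor} it has a horizon, and by Theorem~\ref{nooutertrapped} it admits a cover $\hat M^{2+1}$---still globally hyperbolic, still asymptotically flat with (several) asymptotic regions as in the proof of Theorem~\ref{nooutertrapped}, and still solving Einstein--Maxwell for the pulled-back $(g_{ab},F_{ab})$---whose Cauchy surface contains an outer trapped surface. Theorem~\ref{trapsur} then forces $\Lambda\le 0$, hence $\Lambda=0$ since an AF spacetime cannot solve the Einstein equations with nonzero cosmological constant. It remains to contradict the existence of an outer trapped surface in a $\Lambda=0$ source-free Einstein--Maxwell AF spacetime. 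For this I would return to the identity~(\ref{needed}) at the outermost marginally outer trapped surface $\mathcal S$, which with $\Lambda=0$ reads $d\theta/ds|_{\mathcal S}=-e^{y}\bigl((\tau-\kappa_{NP})^2+c\bigr)$ with $c=\int_{\mathcal S}\bigl(R_{ab}l^al^b+8\pi T_{ab}n^al^b\bigr)\ge 0$. For the Maxwell field $R_{ab}l^al^b=8\pi T_{ab}l^al^b$ is a non-negative perfect square in the components of $F$ (it equals $2(F_{ab}m^al^b)^2$), and $8\pi T_{ab}n^al^b\ge 0$ by the DEC, so $c\ge 0$ automatically. If $c>0$, then $d\theta/ds|_{\mathcal S}<0$ at every point of $\mathcal S$, so deforming outward produces a surface with $\theta<0$, i.e.\ an outer trapped surface strictly outside $\mathcal S$, contradicting that $\mathcal S$ was outermost---hence contradicting Theorem~\ref{nooutertrapped}, and the proof is finished in this case.

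The remaining case, $c=0$ on $\mathcal S$, is where I expect the real work to be. Since the DEC makes $T_{ab}l^b$ future-causal, and $c=0$ forces $R_{ab}l^al^b=8\pi T_{ab}l^al^b=0$ and $T_{ab}n^al^b=0$ at every point of $\mathcal S$, the vector $T_{ab}l^b$ is orthogonal to the timelike plane $\mathrm{span}(l,n)$ and therefore vanishes along $\mathcal S$; for a Maxwell field this means $F$ is null and aligned with $l$ at every point of $\mathcal S$. If moreover $\tau-\kappa_{NP}\equiv 0$ on $\mathcal S$ then $d\theta/ds|_{\mathcal S}\equiv 0$, and the higher-order Taylor argument of Corollary~\ref{noanalytic} cannot be invoked because the metric is not guaranteed analytic. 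I would dispatch this degenerate case in one of two ways: (i) if $F$ vanishes identically in a neighborhood of $\mathcal S$, the metric is vacuum Einstein there, hence a local space form by Theorem~\ref{analy}, hence analytic, and Corollary~\ref{noanalytic} applies near $\mathcal S$ verbatim; (ii) otherwise, argue---using $F_{ab}l^b=0$ and $\tau=\kappa_{NP}$ on $\mathcal S$ together with the Maxwell equations (and, if necessary, a unique-continuation argument for $\theta$ along the outward deformation)---that $\theta$ must still become nonpositive just outside $\mathcal S$. Either route again yields an outer trapped surface outside $\mathcal S$, the desired contradiction, so there are no AF Einstein--Maxwell geons. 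The hardest point is thus precisely this null/degenerate Maxwell configuration along $\mathcal S$, because the analyticity that rescued the pure-vacuum argument in Corollary~\ref{noanalytic} is genuinely unavailable once a propagating Maxwell field is present.
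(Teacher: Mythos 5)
Your main line of argument is exactly the paper's. The paper's entire proof of this corollary is one sentence: proceed as in Corollary~\ref{nogeon2}, except that the Maxwell stress-energy makes $c\neq 0$ on the outermost marginally outer trapped surface, so the contradiction follows directly from Theorem~\ref{trapsur} with no appeal to analyticity or to Corollary~\ref{noanalytic}. Your observation that Theorem~\ref{analy} is unavailable here (an Einstein--Maxwell metric in three dimensions is not an Einstein metric, and the dual scalar propagates), and your computation $R_{ab}l^al^b=2(F_{ab}m^al^b)^2$ and $8\pi T_{ab}n^al^b=(F_{ab}l^an^b)^2$, hence $c=\int_{\mathcal S}\bigl(2(F_{ab}m^al^b)^2+(F_{ab}l^an^b)^2\bigr)\ge 0$ with strict positivity giving an immediate contradiction via (\ref{needed}), are both correct and reproduce the intended argument for the generic case.

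Where you depart from the paper is in refusing to take ``$c\neq 0$'' for granted. You correctly note that $c=0$ is not excluded by the mere presence of a Maxwell field: it occurs precisely when $F$ restricted to $\mathcal S$ is a null field aligned with the outgoing null normal ($F_{ab}l^b=0$, $F_{ab}F^{ab}=0$), equivalently when the gradient of the dual scalar is proportional to $l$ everywhere on $\mathcal S$. The paper simply asserts that the Maxwell stress tensor forces $c\neq 0$ (in the discussion following Theorem~\ref{trapsur} it even writes $c<0$, which must be a sign slip since the DEC forces $c\ge 0$); it offers no argument ruling out the aligned null configuration. Your proposed resolutions of that degenerate case are sketches rather than proofs --- route (i) covers only $F\equiv 0$ near $\mathcal S$, and route (ii) is a program, not an argument --- so, strictly, your proof is incomplete there. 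But this is a loose end you have exposed in the published reasoning rather than one you have introduced: the paper's proof is complete only modulo the same unestablished claim that the Maxwell field cannot be null and $l$-aligned on all of the outermost marginally outer trapped surface.
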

Its proof follows directly by a similar proof to Corollary \ref{nogeon2}; in this case the presence of the Maxwell stress energy tensor implies that $c \neq 0$ so the result follows directly from Theorem \ref{trapsur}. At this point it is clear that the proof of the non-existence of AF geons can be extended to any case for which there can be no outer trapped surface. Such cases include all geons with long range matter fields.

\section{Discussion} 

We have demonstrated that the existence of $2+1$-dimensional  geons
requires a negative cosmological constant.  Moreover, Theorem \ref{topcenhor} 
implies the topology any spacetime  is always hidden behind a single horizon. Specifically,  if one has two or more geons in anti-de Sitter spacetime consisting  of a Cauchy surface containing two or more spatially separated local topological structures such as two or more handles on ${\mathbb R}^2$, an  observer in the asymptotic region only sees a single horizon that hides all
structures from their view.  This is in marked contrast to the 4-dimensional case. 

Although our proof of Theorem \ref{trapsur} is essentially the same as that of \cite{Ida:2000jh}, we have paid more attention to the case of zero cosmological constant. In particular, we have used analyticity to rule out the case of vacuum geons. This step is essential to our work. The fact that the argument for the analyticity of the vacuum Einstein equations in 3 dimensions can be made without assumption of stationarity or staticity is again in contrast to the 4-dimensional case.

Our careful treatment of the vacuum case is also essential to any rigorous proof that there is no Schwarzschild solution in $2+1$ gravity. Theorem \ref{nooutertrapped} and the observation that existence of a Schwarzschild solution implies existence of an $RP^2$ geon also implies that there is no Schwarzschild solution due to this careful analysis of the vacuum case. We are not aware of this appearing previously  in the literature.
Although we have concentrated attention on the vacuum and Einstein-Maxwell cases, 
 similar corollaries clearly can be proven for any other matter fields coupled to gravity that satisfy the DEC. Furthermore, it is reasonable to anticipate that the rigidity of $2+1$ gravity allows one to rule out the existence of outer trapped surfaces in nonanalytic solutions as well. 
 
 Finally, note that these results apply not just to Einstein gravity but to any 2+1 spacetime that satisfies the conditions of the theorems. In particular, note that the null energy condition is in fact a condition on the Ricci curvature and can be easily checked for any spacetime. Similarly, the dominant energy condition is used in Theorem \ref{trapsur} to obtain a definite sign on the curvature contraction in (\ref{curve}). Therefore, these conditions can be checked for solutions of any 2+1 gravitational theory, including higher derivative theories. 
 
Interestingly, certain common 2+1 gravitational theories violate the needed condions. For example, solutions of the gravitating nonlinear $O(3)$ $\sigma$ model \cite{Clement:1998an} satisfies the NEC for positive gravitational constant and violates the NEC for negative gravitational constant. Not surprisingly, these solutions also violate the conditions needed on the curvature for Theorem \ref{trapsur} for the negative gravitational constant case. It is easy to see why in this model; changing the sign of the gravitational constant changes the sign of the stress energy tensor. Consequently, the DEC is violated.  The violation of the NEC and DEC are why the $O(3)$ model exhibits multiple black hole solutions.
 
Secondly, certain nontrivial solutions of topologically massive gravity violate NEC.  The MCL black hole \cite{Moussa:2003fc}, a nontrivial vacuum solution to topologically massive gravity, violates the NEC.  
It  also violates the curvature conditions needed for Theorem \ref{trapsur}, as one would expect. Similarly,  warped AdS black holes in 2+1 dimensions \cite{Nutku:1993eb,Bouchareb:2007yx,Anninos:2008fx}  also generically violate the NEC. The spatially warped AdS black hole violates NEC unless $\mu l = 3$, in which case it  is simply AdS$_3$.  The timelike warped AdS black hole violates the NEC in the squashed case,  $\mu l < 3$. \footnote{The stretched case contains closed timelike curves and hence is not globally hyperbolic.} Therefore, these examples violate the conditions needed for the topological censorship theorem proven in Section \ref{sec4}. Interestingly, this violation is independent of the sign of the gravitational constant. Therefore multiple black hole solutions could, in principle, exist in these theories. Thus geon solutions in which the geons are behind separate black hole horizons could also exist in these theories.  Finally, as the NEC is used in the proof of the singularity theorems, there is no reason that geon solutions need be  shrouded by black hole horizons. 

Note that satisfaction of the NEC is a crucial ingredient for fundamental results in general relativity. The NEC or a convergence condition stronger than the NEC is imposed in all version of the singularity theorems \cite{he}. It is also imposed in the area theorem of Hawking. Consequently, in solutions that violate the NEC, one may not have the connection between increasing horizon area and increase of entropy as dictated by the second law of thermodynamics. 

In summary, there are no geon containing solutions of asymptotically flat vacuum gravity in 2+1 dimensions. Solutions to asymptotically AdS 2+1 gravity can contain geons; however all geons are hidden behind a single black hole horizon. In contrast, solutions in theories such as the $O(3)$ $\sigma$ model with negative gravitational constant and topologically massive gravity potentially can contain geons, either each behind a separate black hole horizon or possibly behind no horizons as solutions in these theories violate the NEC. However, as the NEC is violated, one expects these solutions to be generically unstable. This instability may be related to those noted in topologically massive gravity at the chiral point from a perturbative viewpoint \cite{Grumiller:2008qz}.

\acknowledgments The work was supported by NSERC. In addition, the authors Schleich and Witt 
would like to thank the Perimeter Institute for its hospitality during the period when the ideas 
for this paper were inspired. We would also like to thank Greg Galloway for his help in providing reference \cite{Galloway:1994a} and discussions and Daniel Grumiller for questions on the case of topologically massive gravity.

\end{document}